\journal{Automatica}
\newcommand{\norm}[1]{ \Vert #1 \Vert}
\newtheorem{thm}{Theorem}
\newtheorem{lem}{Lemma} 
\newdefinition{rmk}{Remark}
\newdefinition{defi}{Definition}
\newdefinition{assum}{Assumption}
\begin{document}
	
	\begin{frontmatter}		
	\title{Predicting the future state of disturbed LTI systems: A solution based on high-order observers\tnoteref{thanks}}
	
	\author[UPV]{Alberto Castillo\corref{corr}}
	\ead{alcasfra@upv.es}
	
	\author[UPV]{Pedro Garcia}
	\ead{pggil@isa.upv.es}
	
	\address[UPV]{Instituto de Autom\'{a}tica e Inform\'{a}tica Industrial (ai2), Universitat Polit\`{e}cnica de Val\`{e}ncia, Valencia, Spain.}
	
	\cortext[corr]{Corresponding author.}
	
	\tnotetext[thanks]{This work has been accepted for publication in Automatica, Volume 57 (2021), Issue 2 (February). It was supported by project FPU15/02008, Ministerio de Educaci\'on y Ciencia, Spain.}

	\begin{abstract}
	Predicting the state of a system in a relatively near future time instant is often needed for control purposes. However, when the system is affected by external disturbances, its future state is dependent on the forthcoming disturbance; which is \textendash in most of the cases\textendash\ unknown and impossible to measure. In this scenario, making predictions of the future system-state is not straightforward and \textendash indeed\textendash\ there are scarce  contributions provided to this issue. This paper treats the following problem: given a LTI system affected by continuously differentiable unknown disturbances, how its future state can be predicted in a sufficiently small time-horizon by high-order observers. Observer design methodologies in order to reduce the prediction errors are given. Comparisons with other solutions are also established.
	\end{abstract}
	
	\begin{keyword}
		Predictor; State-prediction; Disturbance observer; Linear Matrix Inequalities~(LMI).
	\end{keyword}

	\end{frontmatter}

	\section{Introduction}
	This note considers the problem of predicting the future-state of the following LTI system:
	\begin{equation}\label{eq:System}
	\begin{aligned}
		&\dot{x}(t) = Ax(t)+B_uu(t-h)+B_{\omega}\omega(t),\\
		&y(t) = C_xx(t)+D_{\omega}\omega(t);
	\end{aligned}
	\end{equation}
	by means of its standard solution:
	\begin{multline}\label{eq:SystemSol}
	x(t+T) = \Phi(T)x(t) + \int_{t}^{t+T}e^{A(t+T-s)}B_uu(s-h)ds +\\ +\int_{t}^{t+T}e^{A(t+T-s)}B_{\omega}\omega(s)ds;
	\end{multline}
	where $t$ represents the current time; $T\geq 0$ the prediction-horizon; $h\geq 0$ a possible input-delay; $x(t)\in\mathbb{R}^n$ the system-state; ${u(t-h)\in\mathbb{R}^m}$ the control input; ${\omega(t)\in\mathbb{R}^q}$ an unknown disturbance-input; $y(t)\in\mathbb{R}^p$ the measurable output; ${A}$, ${B_u}$, $B_{\omega}$, $C_x$, $D_{\omega}$ the nominal system matrices of appropriate dimensions; and $\Phi(T)\triangleq e^{AT}$.

	Predicting the value of $x(t+T)$ is often needed for control purposes. For example, state-predictions are extensively used for closed-loop input-delay compensation (\cite{manitius1979finite,artstein1982linear,krstic2008lyapunov,krstic2010compensation,mazenc2012lyapunov,li2014robustness,cao2017coordinated}); for controlling systems with delays and disturbances (\cite{di2005disturbance, sanz2016predictor, furtat2017disturbance, santos2018enhanced}); for model predictive control~(\cite{mayne2014model,binder2019improved}); and, also, in other applications where estimates of~$x(t+T)$ may be useful, such as collision avoidance~(\cite{polychronopoulos2007sensor}).
	
 	A remarkable problem with disturbed systems is that $\omega(s)$ should be known for ${s\in[t,t+T]}$ in order to implement~\eqref{eq:SystemSol}. This is not the case in almost all practical scenarios, where the future-disturbances are used to be unknown and impossible to measure. By this reason, most part of the previous works simply neglect the disturbance-effect when predicting $x(t+T)$; but, as pointed out by~\cite{sanz2016enhanced}, neglecting the disturbance-effect may lead to significant errors in the computed predictions.
	
	In an effort to reduce the prediction-errors caused by the disturbances, some works have given solutions in order to approximate the second integral in~\eqref{eq:SystemSol}. \cite{lechappe2015new} propose a state-predictor that substitutes this integral by an error-term that compares the actual state with the state-prediction made at $t-T$. Similarly, \cite{sanz2016enhanced} propose a state-predictor that approximates this integral by: $\int_{t-T}^{t}e^{A(t-s)}B_{\omega}\bar{\omega}(s)ds$; being $\bar{\omega}(t)$ an estimate of~${\omega(t+T)}$ that is computed with a disturbance observer, a tracking-differentiator and a Taylor expansion. Both methods have been proved to give better predictions than simply neglecting the disturbance-effect.
	
	This paper shows that, if the disturbance is ${(r+1)}$-times,~${r\in\mathbb{N}\cup\{0\}}$, continuously differentiable, then Eq.~\eqref{eq:SystemSol} can be \textit{directly} approximated by conventional high-order LTI observers. This result provides a novel observer-based state-predictor that is easily implementable and which yields to better accuracy than the solutions of~\cite{lechappe2015new} and \cite{sanz2016enhanced}.
	
	The rest of the paper is organized as follows: Section~\ref{sec:ProblemForm} contains some preliminaries. In Section~\ref{sec:SolutionInget}, the integrals in~\eqref{eq:SystemSol} are rewritten so that $x(t+T)$ is expressed in terms of the \textit{actual} system and disturbance states \textendash where by disturbance state is understood its current value together with its first $r$-derivatives. In Section~\ref{sec:LTIobs}, a high-order LTI observer whose output is an almost direct estimate of~$x(t+T)$ is proposed. Section~\ref{sec:ObsDesign} contains LMI-based design methodologies in order to reduce the prediction errors. Section~\ref{sec:Example} establishes comparisons with respect to the methods of \cite{lechappe2015new} and \cite{sanz2016enhanced} and, finally, Section~\ref{sec:Conclusion} highlights the main conclusions.

	\section{Preliminaries}\label{sec:ProblemForm}
	Let us consider the following assumptions:
	\begin{assum}\label{ass:smooth} The disturbance, $\omega(t)$, is $(r+1)$-times continuously differentiable with $\norm{\omega^{(r+1)}(t)}\leq\epsilon_{r}$; $\forall t\in\mathbb{R}$.
	\end{assum}
	\begin{assum}\label{ass:ZOH}
		The control action, $u(\phi)$, is piece-wise constant and it is defined for all $\phi\in[t-h,\,t+T-h]$.
	\end{assum}
	
	Asm.~\ref{ass:smooth} considers disturbances with a certain degree of differentiability. Smooth disturbances, polynomial-type perturbations, or unknown system-inputs that are generated by exogenous dynamical-systems may satisfy it. Discontinuous or statistical-type perturbations are excluded as they are not differentiable at some \textendash maybe none\textendash \ points.

	Asm.~\ref{ass:ZOH} implies that the control action is generated discreetly and it is introduced to the system via a Zero-Order-Hold~(ZOH) \textendash which is actually the case in almost all computer-based control applications~(\cite{aastrom2013computer}). Also, it states that all its jumping instants, ${t_k\in[t-h,t+T-h]}$, ${k\in \mathbb{N}}$, and its associated control values,~$u_k\in\mathbb{R}^m$, are well-defined and, for prediction purposes, they can be considered as known. This is summarized in the following definition:
	\begin{defi}\label{def:u}
		Let ${t_k\in\mathbb{R}}$, $k\in \{1,...,N\}$, ${N\in\mathbb{N}}$; be the set of time-instants, with ${t_1>t-h}$, ${t_k< t_{k+1}}$, ${t_N<t+T-h}$; such that:
		\begin{equation*}
		\begin{aligned}
			&u(\phi) = u_0\in\mathbb{R}^m,\;\forall \phi\in[t-h,t_{1}),\\
			&u(\phi) = u_k\in\mathbb{R}^m,\;\forall \phi\in[t_k,t_{k+1}),\; 1\leq k\leq N-1,\\
			&u(\phi) = u_{N}\in\mathbb{R}^m,\;\forall \phi\in[t_N,t+T-h].
		\end{aligned}
		\end{equation*}
	\end{defi}

	\section{A solution to the integrals in~\eqref{eq:SystemSol}}\label{sec:SolutionInget}
	Under Asm.~\ref{ass:ZOH} and Def.~\ref{def:u}, the interval $[t,\,t+T]$ can be divided into $N+1$ subintervals where the delayed control action remains constant. Thus, the first integral equals to:
	\begin{multline}\label{eq:IntegU}
	\int_{t}^{t+T}e^{A(t+T-s)}B_uu(s-h)ds = \Gamma_u(t,t_{1}+h)u_{0}+ \\
	+\sum_{k=1}^{N-1}\Gamma_u(t_k+h,t_{k+1}+h)u_k +\Gamma_u(t_{N}+h,t+T)u_{N};
	\end{multline}
	where, for any $\alpha\leq \beta\leq t+T$,
	\begin{equation*}
	\begin{aligned}
		\Gamma_u(\alpha,\beta)&\triangleq \int_{\alpha}^{\beta}e^{A(t+T-s)}B_uds\\
		&=\sum_{j=1}^{\infty}\left[\frac{(t+T-\alpha)^j-(t+T-\beta)^j}{j!}A^{j-1}\right]B_u.
	\end{aligned}
	\end{equation*}
	
	On the other hand, the second integral cannot be directly computed as $\omega(s)$ is not known for $s\in[t,\,t+T]$. However, Ass.~\ref{ass:smooth} allows to rewrite it as follows:
	\begin{lem}
	Under Asm.~\ref{ass:smooth}:
	\begin{multline}\label{eq:IntegW}
	\int_{t}^{t+T}e^{A(t+T-s)}B_{\omega}\,\omega(s)ds = \\ \sum_{i=0}^{r}\left[\frac{T^{i+1}}{(i+1)!}\Gamma_{\omega,i}(T)B_{\omega}\,{\omega}^{(i)}(t)\right] +\mathcal{O}_{r}(t);
	\end{multline}
	with $\Gamma_{\omega,i}(T) \triangleq \sum_{j=0}^{\infty}\frac{(i+1)!}{(i+1+j)!}A^jT^j$; and
	\begin{align}
	\small
	&\mathcal{O}_{r}(t)\triangleq \int_{t}^{t+T}\frac{(t+T-s)^{r+1}}{(r+1)!}\Gamma_{\omega,r}(t+T-s)B_{\omega}\,{\omega}^{(r+1)}(s)ds.\label{eq:Op}
	\end{align}
	\end{lem}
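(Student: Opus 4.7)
The plan is to expand $\omega(s)$ as a Taylor polynomial around $s=t$ with an integral remainder, substitute into the target integral, and evaluate both resulting pieces in closed form.

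First I would write, by Taylor's theorem with integral remainder (which is applicable thanks to Asm.~\ref{ass:smooth}),
\begin{equation*}
\omega(s) = \sum_{i=0}^{r}\frac{(s-t)^i}{i!}\omega^{(i)}(t) + \int_t^s \frac{(s-\tau)^r}{r!}\omega^{(r+1)}(\tau)\,d\tau,
\end{equation*}
and insert this into the left-hand side of \eqref{eq:IntegW}, splitting it into $r+1$ polynomial-type integrals plus one remainder integral $R$. Since $\omega^{(i)}(t)$ does not depend on~$s$, it may be pulled out of each of the polynomial pieces.

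Next I would evaluate the generic polynomial integral $\int_t^{t+T} e^{A(t+T-s)}B_\omega\frac{(s-t)^i}{i!}\,ds$. The change of variable $\sigma = s-t$ converts it into $\int_0^T e^{A(T-\sigma)}B_\omega \sigma^i/i!\,d\sigma$. Using the absolutely convergent series $e^{A(T-\sigma)}=\sum_{j\ge 0} A^j(T-\sigma)^j/j!$ and interchanging sum and integral, the problem reduces to evaluating the Beta-type integrals $\int_0^T (T-\sigma)^j\sigma^i\,d\sigma = T^{i+j+1}\,i!j!/(i+j+1)!$. Collecting powers of~$T$ gives
\begin{equation*}
\sum_{j=0}^{\infty}\frac{T^{i+j+1}}{(i+j+1)!}A^j B_\omega = \frac{T^{i+1}}{(i+1)!}\Gamma_{\omega,i}(T)B_\omega,
\end{equation*}
which is exactly the $i$-th summand in \eqref{eq:IntegW}.

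Finally I would identify the remainder $R$ with $\mathcal{O}_r(t)$. Applying Fubini on the triangle $\{t\le\tau\le s\le t+T\}$ (justified because $\omega^{(r+1)}$ is bounded by $\epsilon_r$ and the kernel is continuous) gives
\begin{equation*}
R = \int_t^{t+T}\!\!\left[\int_\tau^{t+T} e^{A(t+T-s)}B_\omega\frac{(s-\tau)^r}{r!}\,ds\right]\!\omega^{(r+1)}(\tau)\,d\tau.
\end{equation*}
The inner bracket is the same type of polynomial integral just evaluated, with $T$ replaced by $t+T-\tau$ and $i$ replaced by $r$; hence it equals $\frac{(t+T-\tau)^{r+1}}{(r+1)!}\Gamma_{\omega,r}(t+T-\tau)B_\omega$, and after renaming $\tau\to s$ the expression matches \eqref{eq:Op} verbatim.

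The only delicate step is justifying the two interchanges of infinite sum with integral and of the iterated integrals; both are routine since the matrix exponential series converges uniformly on compact sets and $\omega^{(r+1)}$ is uniformly bounded, so no absolute-integrability issue arises. Everything else is a mechanical application of the Beta identity and rearrangement of the resulting double series into the definition of $\Gamma_{\omega,i}(T)$.
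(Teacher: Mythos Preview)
Your argument is correct and complete; the Beta-integral evaluation and the Fubini step are exactly what is needed, and the identification of the remainder with $\mathcal{O}_r(t)$ is clean.

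It is, however, a genuinely different route from the paper's. The paper proceeds in the opposite order: it first applies integration by parts repeatedly to $v_i\triangleq\int_t^{t+T}A^{-i}e^{A(t+T-s)}B_\omega\omega^{(i)}(s)\,ds$, obtaining a recursion that expresses $v_0$ in terms of the boundary values $\omega^{(i)}(t)$, $\omega^{(i)}(t+T)$ and the residual integral $v_{r+1}$; only then does it Taylor-expand each $\omega^{(i)}(t+T)$ about $t$, substitute, and cancel the many telescoping terms by writing $e^{AT}$ as a power series. Your approach Taylor-expands $\omega(s)$ first and evaluates the resulting $\int_0^T(T-\sigma)^j\sigma^i\,d\sigma$ directly. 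The payoff of your route is that it never introduces $A^{-1}$ (so no implicit invertibility assumption or density argument is needed), the series $\Gamma_{\omega,i}(T)$ appears immediately rather than after a cancellation step, and the remainder $\mathcal{O}_r(t)$ drops out of Fubini without having to recombine $v_{r+1}$ with the Taylor residuals $h_i$. The paper's approach, on the other hand, stays closer to the classical variation-of-constants computation and makes the role of each derivative $\omega^{(i)}(t)$ visible as a separate line before any series manipulation.
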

	\begin{proof}
		Refer to \ref{appendix}.
	\end{proof}
	
	The equalities~\eqref{eq:IntegU}-\eqref{eq:IntegW} enable to rewrite~\eqref{eq:SystemSol} as:
	\begin{multline}\label{eq:systemSol2}
		x(t+T) = \Phi(T)x(t) + 	\boldsymbol{\Gamma_{\omega}}(T)\boldsymbol{\omega_r}(t) + \\	\boldsymbol{\Gamma_u}(t,t_1,\,\hdots, t_N,t+T)\boldsymbol{u_{0-N}}+\mathcal{O}_{r}(t),
	\end{multline}
	where $\boldsymbol{\omega_r}(t)\triangleq\; [\omega^T(t),\;\dot{\omega}^T(t),\,\hdots\,, \omega^{(r)\,T}(t)]^T$, $\boldsymbol{u_{0-N}}\triangleq\; [u_0^T,\;u_1^T,\,\hdots\,, u_N^T]^T$ and
	\begin{equation*}
	\begin{aligned}
		\boldsymbol{\Gamma_{\omega}}(T)\triangleq&\; [T\,\Gamma_{\omega,0}(T)B_{\omega},\;\hdots\;,\; \frac{T^{r+1}}{(r+1)!}\Gamma_{\omega,r}(T)B_{\omega}],\\
		\boldsymbol{\Gamma_{u}}(\cdot)\triangleq&\; [\Gamma_u(t,t_1+h),\;\Gamma_u(t_1+h,t_2+h),\, \hdots\\
		&\quad \hdots\;,\Gamma_u(t_{N-1}+h,t_N+h),\Gamma_u(t_N+h,t+T)].
	\end{aligned}
	\end{equation*}
	
	This proves the following result, which permits to directly approximate~\eqref{eq:SystemSol} by high-order observers:
	\begin{thm}\label{thm:Prediction}
		Under Assumptions~\ref{ass:smooth}-\ref{ass:ZOH}. Let $t+T$, $T\geq 0$, be a future time-instant. Consider the set of control inputs of Definition~\ref{def:u} that have been, or could be\footnote{Note that, if $T>h$, then Def.~\ref{def:u} includes a set of future control actions that could be introduced to the system.}, introduced to the system during the time-interval ${[t-h,\,t+T-h]}$. Then, for any given observations, $\hat{x}(t)$, $\boldsymbol{\hat{\omega}_r}(t)$, of $x(t)$, $\boldsymbol{{\omega}_r}(t)$, respectively; the future-state,~${x(t+T)}$, can be predicted by:
		\begin{multline}\label{eq:predEstimate}
		\hat{x}(t+T) = \Phi(T)\hat{x}(t) + 	\boldsymbol{\Gamma_{\omega}}(T)\boldsymbol{\hat{\omega}_r}(t) \\ +\boldsymbol{\Gamma_u}(t,t_1,\hdots,t_N,t+T)\boldsymbol{u_{0-N}},
		\end{multline}
		with an error equal to
		\begin{equation}\label{eq:predError}
		x(t+T)-\hat{x}(t+T) = \mathcal{O}_e(t) + \mathcal{O}_{r}(t);
		\end{equation}
		where $\mathcal{O}_e(t)\triangleq \Phi(T)[x(t)-\hat{x}(t)] + \boldsymbol{\Gamma_{\omega}}(T)[\boldsymbol{{\omega}_r}(t)-\boldsymbol{\hat{\omega}_r}(t)]$.
	\end{thm}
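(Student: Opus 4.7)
The substantive work has already been carried out in the lemma and in the derivation of \eqref{eq:systemSol2}; what remains for Theorem~\ref{thm:Prediction} is essentially a subtraction. The plan is to take \eqref{eq:systemSol2} as an \emph{exact} representation of $x(t+T)$ in terms of the current system-state $x(t)$, the current disturbance-state $\boldsymbol{\omega_r}(t)$, the known control sequence $\boldsymbol{u_{0-N}}$, and the residual $\mathcal{O}_{r}(t)$, and then compare it against the proposed predictor $\hat{x}(t+T)$ in \eqref{eq:predEstimate}.

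The first step is to notice that \eqref{eq:predEstimate} is structurally identical to \eqref{eq:systemSol2} up to three modifications: (i) the true state $x(t)$ is replaced by its observation $\hat{x}(t)$; (ii) the true disturbance-state $\boldsymbol{\omega_r}(t)$ is replaced by $\boldsymbol{\hat{\omega}_r}(t)$; and (iii) the residual $\mathcal{O}_{r}(t)$, which depends on the unavailable samples $\omega^{(r+1)}(s)$ for $s\in[t,t+T]$, is discarded. The control-related term $\boldsymbol{\Gamma_u}(t,t_1,\hdots,t_N,t+T)\boldsymbol{u_{0-N}}$ appears identically in both expressions, which is legitimate because Assumption~\ref{ass:ZOH} together with Definition~\ref{def:u} guarantees that the ZOH sequence $\{u_k\}_{k=0}^{N}$ is known to the predictor over the whole interval $[t-h,\,t+T-h]$.

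The second step is to subtract \eqref{eq:predEstimate} from \eqref{eq:systemSol2} termwise. The control contribution cancels, leaving
\[
x(t+T)-\hat{x}(t+T) = \Phi(T)\bigl[x(t)-\hat{x}(t)\bigr] + \boldsymbol{\Gamma_{\omega}}(T)\bigl[\boldsymbol{\omega_r}(t)-\boldsymbol{\hat{\omega}_r}(t)\bigr] + \mathcal{O}_{r}(t).
\]
Grouping the first two summands into the quantity $\mathcal{O}_e(t)$ introduced in the statement yields exactly \eqref{eq:predError}, completing the argument.

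There is no genuine obstacle in this proof: all the analytical effort was absorbed into the lemma, which turned the unknown-future integral $\int_{t}^{t+T}e^{A(t+T-s)}B_\omega\omega(s)ds$ into an affine function of $\boldsymbol{\omega_r}(t)$ plus a residual $\mathcal{O}_{r}(t)$ depending only on $\omega^{(r+1)}$. The only mild subtlety is bookkeeping on the control term, namely that the partition $\{t,t_1+h,\ldots,t_N+h,t+T\}$ used in \eqref{eq:IntegU} is exactly the one appearing inside $\boldsymbol{\Gamma_u}$ in \eqref{eq:predEstimate}, so both expressions see the same piecewise-constant $u(\cdot)$ on $[t-h,t+T-h]$ and their control contributions cancel exactly.
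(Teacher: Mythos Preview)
Your proposal is correct and matches the paper's own proof essentially verbatim: the paper simply states that the result follows by subtracting \eqref{eq:predEstimate} from \eqref{eq:systemSol2}, which is exactly what you do, with the control term cancelling and the remaining terms grouped into $\mathcal{O}_e(t)+\mathcal{O}_r(t)$.
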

	\begin{proof}
		The proof follows by subtracting~\eqref{eq:systemSol2} and \eqref{eq:predEstimate}.
	\end{proof}
	
	The error equation~\eqref{eq:predError} indicates that the prediction error is caused by two terms: \textbf{i)}~$\mathcal{O}_e(t)$,~which only depends on the current observation errors, $x(t)-\hat{x}(t)$, $\boldsymbol{{\omega}_r}(t)-\boldsymbol{\hat{\omega}_r}(t)$; and \textbf{ii)} $\mathcal{O}_{r}(t)$,~which, under perfect observation conditions (i.e. $\mathcal{O}_e=0$), represents the error caused by predicting~$x(t+T)$ just with the \textit{actual} disturbance state,~$\boldsymbol{{\omega}_r}(t)$, instead of computing it with the whole disturbance \textit{function},~${\omega:[t,t+T]\to \mathbb{R}^q}$. This last term can be nicely seen as the unavoidable error that appears as a consequence of not knowing the future disturbance behavior.
	
	By taking norms in~\eqref{eq:Op}, $\mathcal{O}_{r}(t)$ can be bounded by:
	\begin{equation}\label{eq:boundOp}
		\norm{\mathcal{O}_{r}(t)} \leq  \frac{T^{r+2}}{(r+2)!}\,\mu\,\epsilon_{r},
	\end{equation}
	where $\mu\triangleq \max_{s\in[t,t+T]}\{\norm{\Gamma_{\omega,r}(t+T-s)B_{\omega}}\}$.
	
	It is seen that the size of $\mathcal{O}_{r}(t)$ mainly depends on the prediction horizon and on the $(r+1)$ disturbance-derivative bound \textendash defined in Asm.~\ref{ass:smooth}. On the other hand, $\mathcal{O}_e(t)$ just depends on how the estimates $\hat{ x}(t)$, $\boldsymbol{\hat{\omega}_r}(t)$ are computed.
	
	The next section considers a high-order LTI observer that almost directly implements~\eqref{eq:predEstimate}. In addition, a design methodology in order to quantify and reduce the term $\mathcal{O}_e(t)$ is also included.
	
	\section{A high-order observer for predicting $x(t+T)$}\label{sec:LTIobs}
	Consider the following observer:
	\begin{align}
		&\dot{\hat{\eta}}(t) = (\bar{A}-L\bar{C})\hat{\eta}(t) + \bar{B}_u u(t-h) + Ly(t),\label{eq:observer1}\\
		&\xi(t) = K(T)\hat{\eta}(t) \label{eq:observer2},
	\end{align}
	being $\hat{\eta}(t)\in\mathbb{R}^{l}$, ${l=n+(r+1)q}$,  
	\begin{equation*}
	\begin{aligned}
	&\bar{A}\triangleq \begin{bmatrix}
	A & B_{\omega}\Pi \\
	0 & \Psi 
	\end{bmatrix},\; 
	\Psi\triangleq\begin{bmatrix}
	0_{rq\times q} & I_{rq}\\
	0_{q\times q} & 0_{q\times rq}
	\end{bmatrix},\; \Pi\triangleq \begin{bmatrix}
	I_q & 0_{q\times rq}
	\end{bmatrix} \\
	&\bar{B}_u\triangleq \begin{bmatrix}
	B_u\\0_{(r+1)q\times m}
	\end{bmatrix},\quad \bar{C}\triangleq \begin{bmatrix}
	C_x & D_{\omega} & 0_{p\times rq}
	\end{bmatrix},\\
	&K(T)\triangleq \begin{bmatrix}
	\Phi(T) & \boldsymbol{\Gamma_{\omega}}(T)
	\end{bmatrix}.
	\end{aligned}
	\end{equation*}
	and $L\in\mathbb{R}^{l\times p}$ a free-design matrix.
	
	It can be checked that~\eqref{eq:observer1} is an Luenberger-type observer for the augmented variable ${\eta(t)\triangleq [x^T(t),\,\boldsymbol{\omega_r}^T(t)]^T}$; which, under Asm.~\ref{ass:smooth}, satisfies:
	\begin{equation}\label{eq:augmDyn}
		\dot{\eta}(t)=\bar{A}\eta(t) + \bar{B}_uu(t-h) + \bar{B}_{\omega}\omega^{(r+1)}(t),
	\end{equation}
	being $\bar{B}_{\omega}\triangleq [0_{q\times n+rq},\,I_q]^T$.
	
	Thus, the state-prediction~\eqref{eq:predEstimate} can be directly expressed in terms of $\xi(t)$:
	\begin{equation}\label{eq:statePred}
		\hat{x}(t+T) = \xi(t) + \boldsymbol{\Gamma_u}(t,t_1,\hdots,t_N,t+T)\boldsymbol{u_{0-N}};
	\end{equation}
	and, by~\eqref{eq:observer1} and \eqref{eq:augmDyn}, the term $\mathcal{O}_e(t)$ is given by:
	\begin{equation}\label{eq:Oe}
	\begin{aligned}
		\dot{e}_{\eta}(t)&=(\bar{A}-L\bar{C}){e}_{\eta}(t) + \bar{B}_{\omega}\omega^{(r+1)}(t),\\
		\mathcal{O}_{e}(t)& = K(T)e_{\eta}(t),
	\end{aligned}
	\end{equation}
	being ${e_{\eta}\triangleq \eta(t)-\hat{\eta}(t)}$.

	The observer~\eqref{eq:observer1}-\eqref{eq:observer2} provides two advantages. The first one is that its output is an almost direct estimate of~${x(t+T)}$. It should be just corrected by $\boldsymbol{\Gamma_u}(\cdot)\boldsymbol{u_{0-N}}$ in order to quantify for the effect of the control action.

	The second one is that the knowledge of the dynamics~\eqref{eq:Oe} can be exploited for design purposes, that is: design $L$ so that $\mathcal{O}_e(t)$ is minimized somehow. In fact, a careless design of $L$ may lead to undesirably large prediction errors. By this reason, it is interesting to design it in order to minimize $\mathcal{O}_{e}(t)$.
	
	The next section proposes a LMI-based design methodology to assure that~$\norm{\mathcal{O}_{e}(t)}$ is ultimately bounded by $\gamma\epsilon_{r}$, being $\gamma$ a positive constant to be minimized.
	
	\subsection{An observer design methodology}\label{sec:ObsDesign}
	For the observer design, let us consider that:
	\begin{assum}\label{ass:Observ}
		$(\bar{A},\bar{C})$ is observable.
	\end{assum}
	
	\begin{table*}[b]
		\centering
		\begin{tabular}{|c|cccc|}
			\hline
			& \begin{tabular}{c} observer\\order\end{tabular} & \begin{tabular}{c} observed\\variable\end{tabular} & \begin{tabular}{c}
				computed prediction
			\end{tabular} & \begin{tabular}{c} Type of \\ disturbances  \end{tabular} \\ \hline
			
			\begin{tabular}{c} L\'echapp\'e \\ et al. \end{tabular} &  - & - &  $\hat{ x}(t+T) = x_p(t) + x(t)-x_p(t-T)$	  & \begin{tabular}{c}
				Locally \\ integrable
			\end{tabular} \\
			
			\begin{tabular}{c} Sanz  et al. \end{tabular}	 &  $(r+1)q$  & $\bar{\omega}(t)\approx \omega(t+T)$ &  $\hat{ x}(t+T) = x_p(t)+\int_{t-T}^{t}e^{A(t-s)}B_{\omega}\bar{\omega}(s)ds$  & Asm.~\ref{ass:smooth} \\ 
			
			\begin{tabular}{c} This \\ approach \end{tabular}		& $n+(r+1)q$ & \begin{tabular}{c} $\xi(t)\approx\Phi(T)x(t)+$\\$\int_{t}^{t+T}e^{A(t+T-s)}B_{\omega}\omega(s)ds$\end{tabular} & $\hat{ x}(t+T)=\xi(t)+\Gamma_{u}(\cdot)\boldsymbol{u_{0-N}}$ & Asm.~\ref{ass:smooth} \\
			\hline
		\end{tabular}
		\caption{Comparison of different predictive strategies, where $x_p(t)\triangleq \Phi(T)x(t)+\int_{t}^{t+T}e^{A(t+T-s)}B_uu(s-h)ds$.}
		\label{tab:CualitativeComp}
	\end{table*}

	The observer design procedure is summarized as:
	\begin{lem}\label{lem:designProc}
		Consider a bounded set, $\mathcal{D}$, of the complex plane defined by ${\mathcal{D}\triangleq\left\lbrace z\in\mathbb{C}\,\big\vert\, N+zM+z^*M^T\prec 0  \right\rbrace}$, where ${N=N^T}$ and $M$ are real matrices\footnote{\label{foot:D} Some examples of sets ${\mathcal{D}}$ are given in \cite{chilali1999robust}.} and $z^*$ denotes the conjugate of $z$. Under Asm.~\ref{ass:smooth}-\ref{ass:Observ}, let there exist positive constants $\alpha>0$, $\delta>0$, a symmetric and positive definite matrix ${P\in\mathbb{R}^{l\times l}}$ and a matrix ${Y\in\mathbb{R}^{l\times p}}$ such that
		\begin{subequations}\label{eq:ISSOptSinf}
			\begin{align}
			&\begin{bmatrix}
			P\bar{A} + \bar{A}^TP - Y\bar{C} - \bar{C}^TY^T + 2\delta P & P\bar{B}_{\omega} \\
			\bar{B}_{\omega}^TP 	& -I_q
			\end{bmatrix} \preceq 0, \label{eq:ISSOptSinfLMI}\\
			&P-\alpha K^T(T)K(T)\succeq 0, \label{eq:ISSOptSinfCond1}\\
			&N\otimes P + M\otimes (P\bar{A}-Y\bar{C}) + M^T\otimes (P\bar{A}-Y\bar{C})^T\prec 0 \label{eq:ISSOptSinfCond2}.
			\end{align}
		\end{subequations}
		
		Then, if ${L=P^{-1}Y}$, the error term $\mathcal{O}_{e}(t)$, in~\eqref{eq:Oe}, exponentially approaches \textendash with decay-rate $\delta$\textendash \ to the ball
		\begin{equation}\label{eq:boundOe}
			\norm{\mathcal{O}_{e}}\leq  \epsilon_{r}\sqrt{\frac{1}{2\alpha\delta}},
		\end{equation}
		for any initial state, $e_{\eta}(t_0)$. Moreover, $\text{eig}(\bar{A}-L\bar{C})\in\mathcal{D}$.
	\end{lem}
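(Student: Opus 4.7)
The plan is to use a standard ISS-Lyapunov argument driven by a common positive definite matrix $P$, with the first LMI providing the dissipation inequality, the second LMI translating that inequality into a bound on $\mathcal{O}_e$, and the third LMI handling the eigenvalue placement through the Chilali--Gahinet characterization of LMI regions.

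First I would define $V(e_\eta)\triangleq e_\eta^T P e_\eta$ and differentiate it along~\eqref{eq:Oe}. Substituting $Y=PL$ into~\eqref{eq:ISSOptSinfLMI} turns that inequality into
\begin{equation*}
\begin{bmatrix}
P(\bar{A}-L\bar{C}) + (\bar{A}-L\bar{C})^T P + 2\delta P & P\bar{B}_\omega \\
\bar{B}_\omega^T P & -I_q
\end{bmatrix}\preceq 0.
\end{equation*}
Pre- and post-multiplying this matrix by $\bigl[e_\eta^T,\;\omega^{(r+1)\,T}(t)\bigr]$ and its transpose, and recognising the first row-block as $\dot V - 2 e_\eta^T P\bar{B}_\omega \omega^{(r+1)}(t) + 2\delta V$, yields the dissipation estimate $\dot V + 2\delta V \le \|\omega^{(r+1)}(t)\|^2 \le \epsilon_r^2$ by Asm.~\ref{ass:smooth}.

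Next, applying the comparison lemma gives $V(t)\le e^{-2\delta(t-t_0)}V(t_0)+\tfrac{\epsilon_r^2}{2\delta}\bigl(1-e^{-2\delta(t-t_0)}\bigr)$, so $V$ exponentially approaches the ball $V\le \epsilon_r^2/(2\delta)$ with decay rate $2\delta$. Condition~\eqref{eq:ISSOptSinfCond1} implies $e_\eta^T P e_\eta \ge \alpha\,e_\eta^T K^T(T)K(T) e_\eta = \alpha\|\mathcal{O}_e(t)\|^2$, hence $\|\mathcal{O}_e(t)\|^2\le V(t)/\alpha$. Combining these two inequalities produces the ultimate bound~\eqref{eq:boundOe} with decay rate $\delta$ on the norm (half the rate on $V$), independently of $e_\eta(t_0)$.

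Finally, substituting $Y=PL$ into~\eqref{eq:ISSOptSinfCond2} gives $N\otimes P + M\otimes P(\bar{A}-L\bar{C}) + M^T\otimes (\bar{A}-L\bar{C})^T P \prec 0$, which is precisely the Chilali--Gahinet LMI characterization of $\mathcal{D}$-stability for the closed-loop observer matrix, yielding $\text{eig}(\bar{A}-L\bar{C})\in\mathcal{D}$. The main subtlety is keeping all three LMIs compatible through a single $P$: the first two couple $P$ to the disturbance-to-output gain, while the third constrains the spectrum; feasibility therefore hinges on Asm.~\ref{ass:Observ}, which ensures $L$ can be chosen so that $(\bar{A}-L\bar{C})$ is Hurwitz and its spectrum can be shaped into $\mathcal{D}$ without violating the ISS bound.
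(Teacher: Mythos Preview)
Your argument is correct and follows essentially the same route as the paper: the same quadratic Lyapunov function $V=e_\eta^TPe_\eta$, the same dissipation inequality $\dot V+2\delta V\le\|\omega^{(r+1)}\|^2$ obtained from~\eqref{eq:ISSOptSinfLMI}, the same use of~\eqref{eq:ISSOptSinfCond1} to convert the invariant ellipsoid into the bound on $\|\mathcal{O}_e\|$, and the same appeal to the Chilali--Gahinet characterization for~\eqref{eq:ISSOptSinfCond2}. Your version is slightly more explicit (comparison lemma, the quadratic-form manipulation), and your closing remark on feasibility is extra commentary rather than part of the proof, since the lemma already assumes existence of $P,Y,\alpha,\delta$.
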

	\begin{proof}
		Define ${V(t)\triangleq e_{\eta}(t)^T Pe_{\eta}(t)}$ as a Lyapunov function. Under Asm.~\ref{ass:smooth}, if there exist $\delta>0$ such that:
		\begin{equation}\label{eq:ISS}
			\dot{{V}}(t)+2\delta V(t) - \norm{\omega^{(r+1)}(t)}^2 \leq 0,
		\end{equation}
		then, for any $e_{\eta}(t_0)$, $e_{\eta}(t)$ exponentially approaches to the ellipsoid ${\mathcal{E}\triangleq \left\lbrace e_{\eta}\in\mathbb{R}^{l}\, \big\vert \,e_{\eta}^TPe_{\eta}\leq \frac{\epsilon_{r}^2}{2\delta}\right\rbrace}$ (\cite{fridman2014introduction}).
		
		By differentiating $V(t)$ and substituting~\eqref{eq:Oe}, it is proved that~\eqref{eq:ISS} holds if the LMI~\eqref{eq:ISSOptSinfLMI} is satisfied. 
		
		Eq.~\eqref{eq:ISSOptSinfCond1} guarantees that $\alpha\norm{\mathcal{O}_{e}(t)}^2 = \alpha e_{\eta}(t)^TK^TKe_{\eta}(t)\leq e_{\eta}^T(t)Pe_{\eta}(t)$. Therefore, it assures that ${\norm{\mathcal{O}_{e}}^2\leq \frac{\epsilon_{r}^2}{2\alpha\delta}}$ for any~$e_{\eta}$ belonging to~$\mathcal{E}$.
		
		Finally, as shown in \cite{chilali1999robust}, condition~\eqref{eq:ISSOptSinfCond2} guarantees that ${\text{eig}(\bar{A}-L\bar{C}_2)\in\mathcal{D}}$.
	\end{proof}
	
	\begin{rmk}\label{rmk:minOe}
		In order to minimize~\eqref{eq:boundOe}, Lemma~\ref{lem:designProc} can be optimized in order to maximize $\alpha\delta$. This can be done by standard LMI solvers. Condition~\eqref{eq:ISSOptSinfCond2}, which forces that ${\text{eig}(\bar{A}-L\bar{C}_2)\in\mathcal{D}}$ \textendash with $\mathcal{D}$ bounded\textendash, avoids the numerical divergence~$\norm{L}\to\infty$ that may appear when maximizing~$\alpha\delta$. This constraint also permits to include in the design process additional restrictions that are often required in practice, such as: maximum observer bandwidth or maximum overshoot.
	\end{rmk}

	At this point, the following result has been proved.
	\begin{thm}\label{thm:Prediction2}
		Under Assumptions~\ref{ass:smooth}-\ref{ass:Observ}. Let $t+T$, $T\geq 0$, be a future time-instant. Consider the observer~\eqref{eq:observer1}-\eqref{eq:observer2}, with~$L$ satisfying the conditions of Lemma~\ref{lem:designProc}, and the set of control inputs of Definition~\ref{def:u} that have been, or could be, introduced to the system during ${[t-h,\,t+T-h]}$. Then, the future state, $x(t+T)$, can be predicted by~\eqref{eq:statePred} with an error equal to: $x(t+T)-\hat{ x}(t+T)=\mathcal{O}_{r}(t)+\mathcal{O}_e(t)$; where~$\mathcal{O}_{r}(t)$ satisfies the bound~\eqref{eq:boundOp}, while $\mathcal{O}_e(t)$ is ultimately bounded by~\eqref{eq:boundOe}; being both null for $\epsilon_{r}=0$.
	\end{thm}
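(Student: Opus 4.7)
The plan is to recognize that Theorem~\ref{thm:Prediction2} is essentially a bookkeeping combination of Theorem~\ref{thm:Prediction}, Lemma~\ref{lem:designProc}, and the a priori bound~\eqref{eq:boundOp} already established. No new calculation is required; what is needed is to explicitly link the observer output $\xi(t)$ to the abstract estimates $\hat{x}(t)$, $\boldsymbol{\hat{\omega}_r}(t)$ appearing in Theorem~\ref{thm:Prediction}, and then to invoke each of the three existing results in turn.

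First I would identify the observer-based prediction~\eqref{eq:statePred} with the generic form~\eqref{eq:predEstimate} of Theorem~\ref{thm:Prediction}. Since $\hat{\eta}(t)\in\mathbb{R}^{l}$ is the Luenberger estimate of the augmented variable $\eta(t)=[x^T(t),\boldsymbol{\omega_r}^T(t)]^T$, its block components are precisely $\hat{x}(t)$ and $\boldsymbol{\hat{\omega}_r}(t)$. Premultiplying by the block row $K(T)=[\Phi(T)\;\; \boldsymbol{\Gamma_\omega}(T)]$ yields $\xi(t)=\Phi(T)\hat{x}(t)+\boldsymbol{\Gamma_\omega}(T)\boldsymbol{\hat{\omega}_r}(t)$, so that~\eqref{eq:statePred} coincides term-by-term with~\eqref{eq:predEstimate}. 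Consequently, Theorem~\ref{thm:Prediction} is directly applicable and produces the decomposition $x(t+T)-\hat{x}(t+T)=\mathcal{O}_e(t)+\mathcal{O}_r(t)$ in~\eqref{eq:predError}.

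Next I would handle the two error terms separately. For $\mathcal{O}_r(t)$, I would simply cite the bound~\eqref{eq:boundOp}, which was obtained by taking norms inside the integral representation~\eqref{eq:Op} and applying the uniform bound $\|\omega^{(r+1)}\|\leq\epsilon_r$ from Asm.~\ref{ass:smooth}; clearly $\|\mathcal{O}_r(t)\|\to 0$ as $\epsilon_r\to 0$. For $\mathcal{O}_e(t)$, I would invoke Lemma~\ref{lem:designProc}: feasibility of the LMI~\eqref{eq:ISSOptSinfLMI} with the chosen $L=P^{-1}Y$ guarantees the ISS-type inequality~\eqref{eq:ISS} for $V(t)=e_\eta^T P e_\eta$, so $e_\eta(t)$ converges exponentially (with rate $\delta$) into the ellipsoid $\mathcal{E}$, and~\eqref{eq:ISSOptSinfCond1} translates this into the output ball~\eqref{eq:boundOe}; once again the bound is proportional to $\epsilon_r$ and therefore vanishes for $\epsilon_r=0$.

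The only mildly nontrivial step is the first one, namely making the identification between $\xi(t)$ and the abstract predictor of Theorem~\ref{thm:Prediction} explicit enough that the error decomposition can be reused verbatim; I expect the rest to be a one-line appeal to the already proven bounds~\eqref{eq:boundOp} and~\eqref{eq:boundOe}. The conclusion that both error contributions vanish when $\epsilon_r=0$ is then immediate from the explicit $\epsilon_r$ factors in those two inequalities.
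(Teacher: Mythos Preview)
Your proposal is correct and matches the paper's treatment: the paper does not give a separate proof of Theorem~\ref{thm:Prediction2} at all, but simply prefaces it with ``At this point, the following result has been proved,'' indicating that it is exactly the bookkeeping combination of Theorem~\ref{thm:Prediction}, the bound~\eqref{eq:boundOp}, and Lemma~\ref{lem:designProc} that you outline. Your explicit identification of $\xi(t)$ with $\Phi(T)\hat{x}(t)+\boldsymbol{\Gamma_\omega}(T)\boldsymbol{\hat{\omega}_r}(t)$ via the block structure of $K(T)$ and $\hat{\eta}(t)$ is precisely the link the paper leaves implicit in the passage from~\eqref{eq:predEstimate} to~\eqref{eq:statePred}.
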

	
	\section{Comparisons}\label{sec:Example}
	This section compares the solution of this paper with respect to \cite{lechappe2015new} and \cite{sanz2016enhanced}.
		
	Table~\ref{tab:CualitativeComp} contains a brief summary of each method. The solution provided by \cite{lechappe2015new} is the most easily implementable as it does not need to implement any observer. As shown in Table~\ref{tab:CualitativeComp}, this method is based on replacing the disturbance-dependent integral in~\eqref{eq:SystemSol} by an error-term that compares the actual state with the prediction,  $x_p(t)\triangleq \Phi(T)x(t)+\int_{t}^{t+T}e^{A(t+T-s)}B_uu(s-h)ds$, made at $(t-T)$. In this way, null prediction error is assured for constant disturbances. On the other hand, the predictor in \cite{sanz2016predictor} \textendash originally developed for matched disturbances\textendash \ implements an observer of order ${(r+1)q}$ in order to predict $\bar{\omega}(t)\approx{\omega(t+T)}$. This prediction is then used to approximate the integral by: $\int_{t-T}^{t}e^{A(t-s)}B_{\omega}\bar{\omega}(s)ds$; which can be numerically computed as it just depends on the past values of $\bar{\omega}(t)$.
		
	In contrast, the solution of this paper constructs an observer of order $n+(r+1)q$ which \textit{directly} estimates $\xi(t)\approx\Phi(T)x(t)+\int_{t}^{t+T}e^{A(t+T-s)}B_{\omega}\omega(s)ds$. This leads to significant advantages in terms of implementation simplicity and numerical accuracy; as shown in the next section.
	\begin{table}[t]
		\begin{center}
			\begin{tabular}{|c|cccc|}
				$\Omega$ & This approach & L\'echapp\'e et al. & Sanz et al. &\\
				\hline
				$0.5$ 	& 1.7e-4 		& 0.60  	& 8.9e-4  & \\
				$2$ 	& 0.043 		& 2.23 	& 0.21 &\\
				$4$ 	& 0.66 		& 3.47 	& 2.82 & \\
				\hline
			\end{tabular}
			\caption{$\max_{t+T\in[T,\infty]}\{\norm{x(t+T)-\hat{ x}(t+T)}\}$.}
			\label{tab:BoundsComparisonOmega}
		\end{center}
	\end{table}
	\begin{figure}[t]
		\centering
		\includegraphics[width=3.3in]{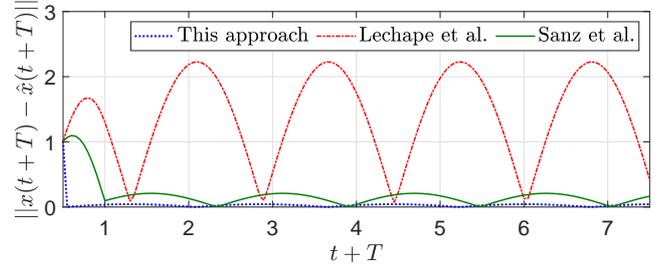}
		\caption{Prediction errors for ${\Omega=2}$~rad/s.}
		\label{fig:SimResult}
	\end{figure}
	
	\subsection{Numerical simulation}
	For a numerical comparison, let us evaluate the proposed observer-based predictor with the same example considered by~\cite{lechappe2015new,sanz2016enhanced}:
	\begin{equation}\label{eq:SystemLech}
	\begin{aligned}
	&\dot{x}(t)=\begin{bmatrix}
	0 & 1\\-9 & 3\end{bmatrix}x(t) + \begin{bmatrix}
	0\\1
	\end{bmatrix}u(t-h) + \begin{bmatrix}
	0\\1
	\end{bmatrix}\omega(t),
	\end{aligned}
	\end{equation}
	with $\omega(t)=3\sin(\Omega t)$; $\Omega\geq 0$~(rad/s); and $x(0)=0$.
	
	An accurate state-prediction is required in order to approximately implement a stabilizing feedback with Finite Spectrum Assignment~(FSA):  $u(t)=-[45,\,18]\hat{x}(t+T)$, with $T=h$. To satisfy Asm.~\ref{ass:ZOH}, the control action is introduced to the system via a ZOH with period $0.05$~sec.
	
	Table~\ref{tab:BoundsComparisonOmega} and Fig.~\ref{fig:SimResult} contain the resulting prediction errors with each method for different disturbance frequencies. It can be seen that the solution of this paper leads to significant improvements in terms of accuracy. 
	
	In the simulations, the proposed solution has been implemented according to Thm.~\ref{thm:Prediction2}, starting from $\hat{\eta}(0)=\eta(0)$, with $L=$[10.28, 4.32, 512, 9680, 1.16e05, 7.96\text{e}05;
	0.22, 58.8, 2960, 83900, 1.5\text{e}06, 1.38\text{e}07]	resulting from the optimization of Lem.~\ref{lem:designProc}, with $r=3$, $\mathcal{D}$ a disk centered at $(0,0)$ with radius $40$ and $\delta\alpha = 6.87\text{e}05$. For a fair comparison, the integral-term in $x_p(t)$ has also been computed by the ZOH method with sample period 0.05 sec. In addition, the variable ${\bar{\omega}(t)\approx \omega(t+T)}$ of \cite{sanz2016enhanced} has been computed by Taylor with the same estimate $\boldsymbol{\hat{\omega}_r}(t)$ given by~\eqref{eq:observer1}; while $\int_{t-T}^{t}e^{A(t-s)}B_{\omega}\bar{\omega}(s)ds$ has been numerically solved by the trapezoidal method with 100 intervals.
	
	It should be noted that none disturbance information has been used in order to compute the predictions. But, if the bound $\epsilon_{r}$ in Asm.~\ref{ass:smooth} was known, then Eqs.~\eqref{eq:boundOp}, \eqref{eq:boundOe} could be employed to get a numerical upper-bound for the prediction error. Here, $\epsilon_{r}=3\Omega^4$. Thus, the error $x(t+T)-\hat{ x}(t+T)$ is ultimately bounded by $0.00356\Omega^4$; which, if it is solved for $\Omega=\{0.5,\,2,\,4\}$ rad/s, it can be checked that it is quite close to the resulting errors of Table~\ref{tab:BoundsComparisonOmega}.

	\section{Conclusions}\label{sec:Conclusion}
	This paper has developed a novel observer-based methodology for future-state prediction in disturbed systems. The given solution is easily implementable and it yields reduced errors if compared with other approaches.
	
	The prediction-error of the given solution is proportional to the parameter $\epsilon_{r}$ of Asm.~\ref{ass:smooth}. Thus, reduced errors will be attained for disturbances with small $\epsilon_{r}$, achieving null-error for $\epsilon_{r}=0$. In addition \textendash and independently of the explicit value of $\epsilon_{r}$\textendash \ an observer design methodology has been provided in order to minimize the prediction-errors.

	The global requirement of Asm.~\ref{ass:smooth} \textendash i.e. $\forall\,t\in\mathbb{R}$\textendash \ could be relaxed just to a finite time-interval: $t\in[t_0,t_f]$, being ${t_f-t_0>T}$ and $\epsilon_{r}$ local bound. In this case, the results of this paper would still being valid for $t\in[t_0,t_f-T]$. Finally, Theorem~\ref{thm:Prediction} and Eqs.~\eqref{eq:observer1}-\eqref{eq:observer2}, \eqref{eq:statePred} remain valid for time-varying prediction-horizons, i.e. $T\triangleq T(t)$.

	\appendix
	\section{Proof of equality~\eqref{eq:IntegW}}\label{appendix}
	Let $v_{i}\triangleq \int_{t}^{t+T}A^{-i}e^{A(t+T-s)}B_{\omega}\omega^{(i)}(s)ds$. Under Ass.~\ref{ass:smooth}, $v_i$ can be integrated by parts for any $i<r+1$; leading to:
	\begin{equation}\label{eq:recInteg}
		v_i  =-A^{-(i+1)}B_{\omega}\omega^{(i)}(t+T)+A^{-(i+1)}e^{AT}B_{\omega}\omega^{(i)}(t) +v_{i+1}
	\end{equation}

	The recursive Eq.~\eqref{eq:recInteg} allows to express $v_0$ as:
	\begin{equation}\label{eq:IntegDem}
		\begin{aligned}
			v_0 =&-A^{-1}B_{\omega}\omega(t+T) -  ... - A^{-(r+1)}B_{\omega}\omega^{(r)}(t+T) \\
			&+ A^{-1}e^{AT}B_{\omega}\omega(t) + ... + A^{-(r+1)}e^{AT}B_{\omega}\omega^{(r)}(t) \\
			&+ v_{r+1}.
		\end{aligned}
	\end{equation}
	
	Let $h_i\triangleq\int_{t}^{t+T}\frac{(t+T-s)^i}{i!}\omega^{(r+1)}(s)ds$.  By Taylor, it holds that:
	\begin{equation}\label{eq:Taylor}
		\begin{aligned}
		&\omega(t+T) = \omega(t) + T\dot{\omega}(t) + ... + \frac{T^r}{r!}\omega^{(r)}(t) + h_r\\
		&\dot{\omega}(t+T) = \dot{\omega}(t) + T\ddot{\omega}(t) + ... + \frac{T^{r-1}}{(r-1)!}\omega^{(r)}(t) + h_{r-1} \\
		&\hspace{0.5cm}\vdots\\
		&\omega^{(r)}(t+T) = \omega^{(r)}(t) + h_0
		\end{aligned}
	\end{equation}
	
	Thus, by substituting~\eqref{eq:Taylor} into \eqref{eq:IntegDem}, and after rearranging terms, the integral $v_0$ results in:
	\begin{equation}\label{eq:IntegDem2}
	\begin{aligned}
		v_0 =& \big[-A^{-1}+A^{-1}e^{AT}\big]B_{\omega}\omega(t)\\
		     &+\big[-A^{-2}-TA^{-1}+A^{-2}e^{AT}\big]B_{\omega}\omega^{(1)}(t)\\
		& \quad\vdots\\
		& +\Big[-A^{-(r+1)}-TA^{-r}-\frac{T^2}{2!}A^{-r-1} - ...
		\\& \hspace{1cm}... -\frac{T^r}{r!}A^{-1} + A^{-(r+1)}e^{AT}\Big] B_{\omega}\omega^{(r)}(t)\\
		&+v_{r+1} - A^{-1}B_{\omega}h_r - ... - A^{-(r+1)}B_{\omega}h_0.
	\end{aligned}
	\end{equation}
	
	Equality~\eqref{eq:IntegW} follows by: i) Noting that $v_0$ equals to the integral in the LHS of~\eqref{eq:IntegW}; and ii) expressing $e^{AT}$ as power-series and cancelling the terms that are added and subtracted in each term of the RHS of~\eqref{eq:IntegDem2}.

	\bibliography{biblio}

	\end{document}